\def\ps@headings{%
\def\@oddhead{\mbox{}\scriptsize\rightmark \hfil \thepage}%
\def\@evenhead{\scriptsize\thepage \hfil \leftmark\mbox{}}%
\def\@oddfoot{}%
\def\@evenfoot{}}
\newtheorem{theorem}{\textbf{Theorem}}[section]
\newtheorem{lemma}[theorem]{\textbf{Lemma}}
\newenvironment{definition}[1][Definition]{\begin{trivlist}
\item[\hskip \labelsep {\bfseries #1}]}{\end{trivlist}}
\newcommand{\comment}[1]{}
\newcommand{\domain}{\mathcal{\mathbf{D}}}
\newcommand{\domainc}{\mathcal{\overline{\mathbf{D}}}}
\newcommand\length[1]{|#1|}
\newcommand{\bridge}{\mathcal{\mathbf{B}}}
\newcommand{\edge}{\mathcal{\mathbf{E}}}
\newcommand{\arc}{\vec{\mathcal{\mathbf{E}}}}
\newcommand{\vertex}{\mathcal{\mathbf{V}}}
\newcommand{\switch}{\mathcal{\mathbf{S}}}
\newcommand{\ugraph}{\mathcal{\mathbf{G_u}}}
\newcommand{\dgraph}{\mathcal{\mathbf{G_d}}}
\newcommand{\WO}{\mathcal{\mathbf{W_{opt}}}}
\newcommand{\WOnew}{\mathcal{\mathbf{W'_{opt}}}}
\newcommand{\WOrev}{\mathcal{\mathbf{W^{rev}_{opt}}}}
\newcommand{\W}{\mathcal{\mathbf{W}}}
\newcommand{\SetW}{\mathcal{\mathbf{\Omega}}}
\newcommand{\WOlen}{L_{opt}}
\newcommand{\Wlen}{L}
\newcommand{\Wlenk}{L_k}
\begin{document}

%
% paper title
% can use linebreaks \\ within to get better formatting as desired
\title{On Diagnosis of Forwarding Plane via Static Forwarding Rules in Software Defined Networks}

% author names and affiliations
% use a multiple column layout for up to three different
% affiliations

\author{\IEEEauthorblockN{Ula\c{s} C. Kozat, Guanfeng Liang and  Koray K\"{o}kten}
\IEEEauthorblockA{DOCOMO Innovations, Inc., Palo Alto, CA 94304\\
Email: \{kozat,gliang,kkokten\}@docomoinnovations.com}
}

% use for special paper notices
%\IEEEspecialpapernotice{(\today)}

% make the title area

\maketitle

\begin{abstract}
Software Defined Networks (SDN) decouple the forwarding and control planes from each other. The control plane
is assumed to have a global knowledge of the underlying physical and/or logical network topology so that it can monitor,
abstract and control the forwarding plane. In our paper, we present solutions that install an optimal or near-optimal (i.e., within $14\%$ of the optimal) number of static forwarding rules on switches/routers so that any controller can verify the topology connectivity and detect/locate link failures at data plane speeds without relying on state updates from other controllers. Our upper bounds on performance indicate that sub-second link failure localization is possible even at data-center scale networks. For networks with hundreds or few thousand links, tens of milliseconds of latency is achievable. 
\end{abstract}

\section{Introduction}
SDNs are emerging as a principal component of future IT, ISP, and telco infrastructures. It promises to change networks from a collection of independent autonomous boxes to a well-managed, flexible, multi-tenant transport fabric \cite{McKeown:2008,McKeown:Keynote}.  As core principles, SDNs (i) decouple the forwarding and control plane, (ii) provide well-defined forwarding abstractions (e.g., pipeline of flow tables), (iii) present standard programmatic interfaces to these abstractions (e.g., OpenFlow), and (iv) expose high level abstractions (e.g., VLAN, topology graph, etc.) as well as interfaces to these service layer abstractions (e.g., access control, path control, etc.). \comment{Network controllers that are in charge of a given forwarding plane must know (ii) and implement items (iii) and (iv), accordingly.}

To fulfill its promise to convert the network to a well-managed fabric, presumably, a \emph{logically} centralized network controller is in charge of the whole forwarding plane in an end-to-end fashion with a global oversight of the forwarding elements and their inter-connections (i.e., nodes and links of the forwarding topology) on that plane. However, this might not be always true. For instance, there might be failures (software/hardware failures, buggy code, configuration mistakes, management plane overload, etc.) that disrupt the communication between the controller and a strict subset of forwarding elements. In another interesting case, the forwarding plane might be composed of multiple administrative domains under the foresight of distinct controllers. If controller of a given domain fails to respond or has very poor monitoring and reporting, then the other controllers might have a stale view of the overall network topology leading to suboptimal or infeasible routing decisions. 

For many systems that require a high grade of network availability and performance, it is essential for a controller to be able to verify the forwarding plane topology and identify link failures at forwarding plane speeds even under the aforementioned conditions. To attain this, we propose solutions that allocate a fraction of forwarding rules at each switch for control flows that can verify the topological connectivity, detect link failures, and locate one or more failed links. Our solutions are either optimal or order optimal in number of forwarding rules to be allocated for control flows as well as in number of control messages. Although our solutions are not optimal in latency in general, we can achieve tens of milliseconds of latency for topologies with 1000 links. For data-center scale topologies with 100K links, we present how one can effectively trade-off overhead in hardware rules and/or control messages to achieve sub-second latencies. Our solutions can guarantee locating an arbitrary single link failure, while it  can also probabilistically locate multiple failed links. We show over a simple example why one cannot guarantee locating multiple link failures over arbitrary topologies and failure scenarios.  We also present our simulation results over real topologies to quantify the overhead of our solution and its success of locating more than one failed link.

The paper is organized as follows. In Section~\ref{se:model}, we provide the detailed system model. In Section~\ref{se:lesstrivial}, we focus on verification of topology connectivity and establish optimality results as well as performance bounds. In Section~\ref{se:strong}, we turn our attention to locating a single but arbitrary link failure. We provide an order optimal solution and present performance bounds on important metrics. In Section~\ref{se:hard}, we extend the results to multiple link failures.  In Section~\ref{sec:results}, we show performance results using publicly available, real world topologies. In Section~\ref{se:related}, we cover the most related works. Finally, we conclude in Section~\ref{se:conc}.

\section{System Model}
\label{se:model}
\subsection{Network Architecture}
The main features of the network architecture is captured in Fig.~\ref{fig:systemmodel}. Our system model follows the OpenFlow model \cite{McKeown:2008}. Network consists of a forwarding plane and a control plane. The forwarding plane consists of forwarding elements (will use the term \emph{switch} interchangeably), each supporting a finite number of \emph{\{match, action\}} rules. A \emph{match} pattern is defined using incoming port ID and packet headers using ternary values 0, 1, or * (i.e., don't care). In essence, each match pattern defines a network flow. The following \emph{actions} on a particular flow match are allowed: \emph{forward to outgoing port ID}, \emph{drop packet}, \emph{pop outer header field}, \emph{push outer header field}, \emph{overwrite header field}. Regardless of match length and number of actions taken per matching (e.g., switch can first rewrite a particular field in the packet header, then push an MPLS label, and finally forward to a particular interface), we will count the cost of each flow matching rule as one forwarding rule.   The forwarding plane in Fig.~\ref{fig:systemmodel} has seven switches ($s_1$ through $s_7$) and nine interfaces/links between them. We assume that each link is bidirectional, i.e., a switch can both receive and send over the same link. The model is applicable to cases where multiple interfaces exist between the same pair of switches, there are logical interfaces  (e.g., a preconfigured tunnel with a tunnel ID configured on both ends), or pairs of directional links in opposite directions interconnect the same pair of switches.

The control plane consists of one or more controllers. \comment{In Fig.~\ref{fig:systemmodel},  there are three controllers.} A switch can be programmed (i.e., forwarding rules are installed) by only one controller (called \emph{master} controller). In contrast, a controller can install forwarding rules on multiple switches using control interfaces (depicted by red dotted lines in the figure). The control interfaces can be in-band (i.e., a slice of the forwarding plane are used for inter-controller and controller-switch communication) or out-of-band (i.e., a separate physical network interconnects controllers with each other and switches) or a mixture of both.  We refer to the subset of switches that a controller can send packets to and install rules on as its control domain $\domain$. The rest of the switches constitutes its complementary domain $\domainc$. For $C_3$, $\domain = \{s_4, s_7\}$ and $\domainc = \{s_1, s_2, s_3, s_5, s_6\}$, i.e., $C_3$ can send control packets to and install forwarding rules on $s_4$ and $s_7$.  To install a rule on a switch in its $\domainc$ (e.g., $s_1$), $C_3$ must send its request to the master controller (e.g., $C_1$) of that switch.

Switches are only allowed to check the health of their local forwarding interfaces and report to the controllers using control interfaces. Therefore, a controller cannot receive any failure notifications for the interfaces between the switches in its complementary domain, e.g., an interface failure between $s_2$ and $s_3$ is not reported to $C_3$ by $s_4$ or $s_7$. 

One way for a controller to learn about the failures within its $\domainc$ is to receive state updates from other controllers in charge of $\domainc$. This path however might be slow or might be disrupted for various reasons (e.g., configuration mistakes, controller overload, software/hardware failures on the control plane, DOS attacks) or might be simply untrusted. As long as the overhead is low, installing static rules for verifying connectivity or policies of $\domainc$ would be very valuable under these circumstances.

Once these static forwarding rules are in place, a controller can learn about the topology failures within its $\domainc$ by sending control messages from its $\domain$ into its $\domainc$ and listen to the responses.  These messages constitute control flows. Since  the controller cannot install new rules in $\domainc$, a priori static forwarding rules must be installed on $\domainc$ switches for the control flows.  Finding these static rules given the forwarding plane topology for optimal network diagnosis is the focus of this paper. Once the forwarding plane topology is learned and consistently shared across the controllers, any controller can compute these static rules, distribute to other controllers, and each controller installs them in their current $\domain$. Once the static rules are confirmed, controllers can start using them for network diagnosis.

Our system model does not assume that $\domain$ and $\domainc$ are fixed or known a priori. $C_3$ for instance might have had functional control interfaces to other switches in $\domainc$, but they may have been lost. A priori $C_3$ does not have any clue on which control interfaces would be failed. In another case, there might be a dynamic partitioning of the forwarding plane such that a controller become in charge of different forwarding elements over time based on some optimization logic. \comment{The states of different domains then could be shared between controllers as part of a control plane which might be much slower than the forwarding plane or even might be interrupted as previously pointed out.}
Not assuming fixed $\domain$ and $\domainc$ may be useful even when they are actually fixed. If multiple controllers coexist each with a different $\domain$ and $\domainc$, instead of setting static rules with respect to each controller, it would be more efficient (i.e., requires less forwarding rules for control flows) to set up static rules independent of what $\domain$ and $\domainc$ are actually. 

In contrast to $\domainc$, a controller can dynamically install new forwarding rules on its $\domain$ for control flows. As it will be clear in the later sections, we will make use of this advantage to enforce loopback of the control flows to the controller.

\begin{figure}[!t]
\begin{center}
\includegraphics[width=0.8\columnwidth]{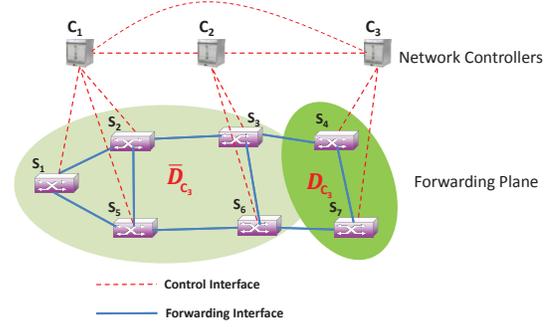}
\caption{System Model}
\label{fig:systemmodel}
\end{center}
\vspace{-0.5cm}
\end{figure}

\subsection{Failure Scenarios}
We will differentiate between two different failure scenarios: \emph{(i) Symmetric failures}, where a bidirectional link is either functional in both directions or non-functional in either direction.  \emph{(ii) Asymmetric failures}, where a bidirectional link can fail in one direction but not necessarily in the other direction.

For symmetric failures, it suffices to visit an interface in any direction to check its health. Thus, if symmetric failures are the most common scenarios, then one can specifically plan network diagnosis for such more probable incidents and save costs. We will model the forwarding plane as an undirected graph $\ugraph(\switch,\edge)$ in this case. Here, $\switch$ is the set of vertices, where there is a 1-1 mapping from the forwarding elements to the vertices in $\switch$ and  $\edge$ is the set of edges, where there is  1-1 mapping from the forwarding interfaces to the edges in $\edge$.

For asymmetric failures, both directions of the link must be examined. The problem is more constrained and imposes higher diagnosis costs. However, as we will see later in the paper, we can find optimal solutions. We will model the forwarding plane as a directed graph $\dgraph(\switch,\arc)$ in this case. Here, $\arc$ is the set of arcs, where there is  1-1 mapping from each direction of the forwarding interfaces to the arcs in $\arc$. 

\subsection{Problem Statement and Cost Metrics}
Given the forwarding plane topology and without any information on $\domain$, we would like to compute and install static forwarding rules such that  as long as controller is master of at least one switch (i.e., $|\domain| \geq 1$) it can (i) verify the topology connectivity and (ii) locate link failures.  

Static forwarding rules can be interpreted as one or more walks on the undirected ($\ugraph(\switch,\edge)$) or directed ($\dgraph(\switch,\arc)$) topology graph  of the forwarding plane. Without loss of generality, we only consider closed walks (i.e., cycles) since such walks are the only ones that can satisfy the constraint $|\domain| \geq 1$. Let $\SetW$ represent the set of all closed walks on $\ugraph$. Suppose we picked the subset of walks $\SetW_j  = \{\W_{j1}, \hdots, \W_{jK}\} \in \SetW$ to diagnose the forwarding plane. Then, we can measure the cost of $\SetW_j$ as follows.

\subsubsection{Number of control messages} Each walk is traversed by a different control packet as otherwise we cannot differentiate among the walks. Thus, the overhead in number of control messages is equal to the number of walks we picked (i.e., $K$).

\subsubsection{Latency} Each control packet $k$ traverses the walk $\W_{jk}$ and thus experiences a latency of $\tau \times \Wlenk$. Here, $\Wlenk$  is the length of walk $\W_{jk}$ and $\tau$ is the switching delay for a control packet. Depending on whether the walks can be executed in parallel or not, the overall latency figure would vary. If walks can be done in parallel, then the total latency is given by $\tau \times \max_k\Wlenk$. If walks must be done sequentially, then the total latency becomes $\tau \times \sum_k \Wlenk$. Any hybrid solution (i.e., walks are grouped together, within a group they are executed in parallel while across groups they are executed in sequence) would have a latency between these two extremes.

\subsubsection{Number of static rules} Let $\arc(W_{jk})$ represent the set of arcs traversed by $\W_{jk}$.\comment{, i.e., $\arc(W_{jk}) \subseteq \arc$.}  If an arc is traversed multiple times within the same walk or across walks, it can share the same forwarding rule at the head of arc (i.e., sending switch).\footnote{The easiest way of proving this is to let controller use source based routing. In this brute-force approach, each arc is assigned a unique label. The whole route is specified in the packet header by concatenating the labels of the links in the order they are visited. Each switch has a matching rule for the label and the action consists of popping the outermost label and forwarding. However, for data-center scale topologies, this would create very large control packets. A better way is to let switches do the packet labeling/tagging where necessary (e.g., see Section~\ref{se:mapping}).}\comment{\footnote{The easiest way of proving this is to let controller use source based routing. In this brute-force approach, each arc is assigned a unique label. Controller stacks the labels of each arc along the walk into the packet header such that the first hop label is the outermost label and the last hop label is the innermost one. Controller injects the packet to the first switch on the walk, which matches the outermost label to the specific link and takes a forwarding action that first pops the outermost label and then sends the packet to that link. Each intermediate switch does the same match-action sequence.}\textsuperscript{,}\footnote{Source based routing would lead to large control packets for data-center scale topologies. It is not hard to show that one can prevent such large control packets by instead pushing and popping labels on the forwarding plane rather than at the source by treating each duplicated sequence of arcs on a walk as a single \emph{tunnel} or \emph{trunk}. Labels are pushed before the tunnel and popped at the end of the tunnels.} }For each unique arc in any $\W_{jk}$, there must be a distinct forwarding rule as without such a rule traversing the corresponding link in the specific direction is not possible. Accordingly, we can express the total number of static forwarding rules as $|\bigcup_k \arc(W_{jk})|$, i.e., the cardinality of union of arc sets belonging to distinct walks used for diagnosis.

For an arbitrary topology, it is not possible to optimize each of these cost metrics simultaneously. In the following sections, we present solutions for topology verification and locate an arbitrary (but) single link failure. Our focus will be mainly on minimizing the total number of static forwarding rules. Our solutions are either order-optimal or optimum in this metric. We will also quantify the costs in terms of latency and number of control messages. As extensions, we will consider how adding more static rules can reduce latency and probabilistically locate multiple link failures. We will also provide how latency and number of control messages can be traded off.\comment{ without changing the cost in number of static rules.} In the final part, we exemplify why locating multiple link failures is not a solvable problem for arbitrary topologies and present simulation results about the success of locating multiple link failures using our solution.

\section{Verifying Topology  Connectivity}
\label{se:lesstrivial}
For both symmetric and asymmetric failure cases, our solutions are based on first computing a single walk $\WO$ that has the shortest length across all walks that visit each edge in $\ugraph$ or arc in $\dgraph$ at least once. For symmetric failure case, our solution is order optimal (and tight) in number of static rules. For asymmetric failures, our solution is optimum in number of static rules. For both cases, solutions are optimum in number of control messages as they require only one control message.
\subsection{Symmetric Failure Scenarios}
Computing $\WO$ is in general known as \emph{Chinese Postman Problem}. For undirected connected graphs such as $\ugraph(\switch,\edge)$, $\WO$ can be computed in polynomial time \cite{edmonds73}. Ideally, we want $\WO$ to be an Euler cycle that visits each edge in $\edge$ exactly once. A well-known necessary and sufficient condition for existence of an Euler cycle in a connected undirected graph is to have every vertex to have an even degree. \comment{In other words, all switches in the forwarding plane must have even number of interfaces.} When an Euler cycle exists, $\SetW^* = \{\WO\}$ becomes the optimum choice for the verification of topology connectivity. $\SetW^*$ minimizes both the total number of static forwarding rules and the total number of control messages. Specifically, $\SetW^*$ requires $\length{\edge}$ static rules and one control message. The latency of $\SetW^*$ becomes $\tau \times \length{\edge}$.

Unfortunately, not all forwarding topologies have Euler cycles. E.g., the forwarding plane in Fig.~\ref{fig:systemmodel} has no Euler cycles. Remember that when we want to optimize the number of static rules, it is not the length of $\WO$ (denoted as $\WOlen$), but the cardinality of $\arc(\WO)$ that must be minimized. Denote the total number of duplicate arcs for a given walk $\W$ as $\kappa(\W)$. Then, we can express the total number of static rules by $\WO$ as $(\WOlen - \kappa(\WO))$.\comment{, where $\WOlen$ is the length of $\WO$.} After stating the next lemma, we can at least claim order optimality for $\SetW^*$ in number of static rules, i.e., $\length{\edge} \leq (\WOlen - \kappa(\WO)) \leq 2\length{\edge}$. The solution is obviously optimum in number of control messages. The latency can be written as $\tau \times (\WOlen - \kappa(\WO))$.

\begin{lemma} \label{lemma:wopt} In $\WO$, no edge is traversed more than twice. In other words: $\WOlen \leq 2\length{\edge}$. 
\end{lemma}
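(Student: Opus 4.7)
The plan is to argue by contradiction: suppose some edge $e = \{u,v\} \in \edge$ is traversed $k \geq 3$ times in $\WO$, and construct a strictly shorter closed walk that still covers every edge of $\ugraph$. The key translation is to view $\WO$ as an Eulerian circuit in an auxiliary multigraph $G'$ on the vertex set $\switch$, where each $f \in \edge$ appears with multiplicity equal to the number of times $\WO$ crosses $f$. By construction $G'$ is connected, contains every edge of $\ugraph$ at least once, and has all even vertex degrees (since $\WO$ is a closed walk). Conversely, any multigraph on $\switch$ enjoying these three properties yields, via Euler's theorem, a closed walk in $\ugraph$ visiting every edge whose length equals the total edge count of the multigraph.

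Starting from $G'$, I would delete two of the $k \geq 3$ parallel copies of $e$ to obtain $G''$. The degrees at $u$ and $v$ each drop by $2$ and all other degrees are untouched, so $G''$ is still all-even. At least $k-2 \geq 1$ copies of $e$ survive, so $u$ and $v$ remain adjacent, and any walk in $G'$ that used a deleted copy of $e$ can be rerouted through the retained copy; hence $G''$ is still connected. It obviously still contains every edge of $\ugraph$ at least once. Applying Euler's theorem to $G''$ yields a closed walk in $\ugraph$ of length $\WOlen - 2 < \WOlen$ that covers every edge, contradicting the optimality of $\WO$. Summing the at-most-two traversals over all edges then gives $\WOlen \leq 2\length{\edge}$.

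The main obstacle I expect is verifying that removing two copies of $e$ preserves connectivity of the multigraph, but this is immediate because at least one parallel copy of $e$ survives and every other edge remains. Everything else is a clean parity count plus an appeal to Euler's theorem; no metric properties of $\ugraph$ are needed beyond connectivity, which is inherited from the hypothesis that $\ugraph$ admits a traversing closed walk in the first place.
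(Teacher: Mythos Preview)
Your argument is correct and is essentially the same as the paper's: both build the auxiliary multigraph whose Euler circuit is $\WO$, delete two parallel copies of an over-traversed edge, observe that parity (and hence Eulerianness) is preserved, and derive a shorter covering closed walk to reach a contradiction. Your write-up is slightly more explicit than the paper's in checking that connectivity survives the deletion (because at least one copy of $e$ remains), which the paper leaves implicit.
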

\begin{proof}
Suppose an edge is traversed more than twice. Then, we can construct an undirected graph $\ugraph(\switch',\edge')$ from $\ugraph(\switch,\edge)$ such that $\switch=\switch'$ and $\edge \subset \edge'$ with each occurrence of an edge in $\WO$ has a 1-1 mapping to $\edge'$. Since $\WO$ is an Euler cycle on $\ugraph(\switch',\edge')$, every vertex must have an even degree as this is a necessary and sufficient condition for connected graphs 
\cite{edmonds73}.  However, if there are more than two edges between two vertices as in this case, one can remove two edges at a time between these vertices until there remains one or two edges between the same vertices. Since we started with vertices that have even degrees, removing an even number of edges would preserve the same property. In other words, an Euler path exists after these edge deletions such that it is strictly shorter than $\WO$ and visited every edge in $\edge$ at least once. Hence, $\WO$ cannot be the shortest cycle for topology verification, which is a contradiction.
\end{proof}

We can express a tighter lower bound for the number of static rules than $\length{\edge}$ by counting \emph{bridge} links. \comment{ in a given topology graph.}

\begin{definition}
An edge (if omitted) that partitions an undirected graph into two disconnected sub-graphs is called a \textbf{\emph{bridge}}.
\end{definition}

When a walk on a graph starts on one side of a bridge, if it crosses the bridge in one direction, it has to cross the same bridge in the reverse direction to come back to the starting point. Using this trivial observation, one can state the following lower bound on total number of forwarding rules.

\begin{lemma}[Lower Bound] \label{prop:lowerbound}
Topology verification requires at least $\length{\edge}  + \length{\bridge}$ forwarding rules.
\end{lemma}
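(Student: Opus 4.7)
The plan is to count distinct directed arcs required by any valid set of closed walks $\SetW_j$ that together cover every edge of $\ugraph$, and split the count into a contribution from bridge edges versus non-bridge edges. Since each edge in $\edge$ must appear in some $\W_{jk}$ in at least one direction, we immediately get a baseline contribution of at least one arc per edge, i.e.\ $\length{\edge}$ arcs overall.

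The heart of the argument is showing that each bridge contributes an \emph{extra} arc beyond that baseline. First I would fix a bridge $e \in \bridge$ and note that removing $e$ splits $\ugraph$ into two connected components, say $\ugraph_1$ and $\ugraph_2$, whose vertex sets partition $\switch$. Since $e \in \edge$ must be covered, some closed walk $\W_{jk}$ in $\SetW_j$ traverses $e$. Because $\W_{jk}$ is closed, it starts and ends in the same component (either $\ugraph_1$ or $\ugraph_2$), and every time it leaves one component it must cross $e$ (the only edge between them). Hence the number of crossings of $e$ in $\W_{jk}$ is even and strictly positive, so $\W_{jk}$ traverses $e$ in both directions. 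Therefore both arcs corresponding to $e$ appear in $\bigcup_k \arc(\W_{jk})$, contributing $2$ to the arc count for that bridge instead of just $1$.

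Combining the two counts, the non-bridge edges contribute at least $\length{\edge} - \length{\bridge}$ arcs and the bridges contribute at least $2\length{\bridge}$ arcs, giving
\[
\Big|\bigcup_k \arc(\W_{jk})\Big| \;\geq\; (\length{\edge} - \length{\bridge}) + 2\length{\bridge} \;=\; \length{\edge} + \length{\bridge},
\]
which is the claimed lower bound. No step is really hard; the only subtle point is making sure the ``both directions'' argument is pinned to a single closed walk rather than to the collection as a whole, since the rule count is over the union of arc sets and one might worry that two different walks could cover the two directions separately. That concern is moot because the above argument already forces \emph{some single} walk crossing the bridge to use both directions, so the pair of arcs is present regardless of how walks are distributed.
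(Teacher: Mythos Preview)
Your proof is correct and follows essentially the same approach as the paper: each edge contributes at least one arc, each bridge must be crossed in both directions by any closed walk that uses it (since the walk must return to its starting component), and summing gives $\length{\edge}+\length{\bridge}$. The paper's version is terser, omitting the explicit parity argument and the remark about single versus multiple walks, but the underlying reasoning is identical.
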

\begin{proof}
For topology verification, each edge on $\ugraph$ must be traversed in at least one direction. Moreover, each bridge must be crossed in both directions as otherwise we cannot loop back to the starting point. Thus, there are at least $\length{\edge}  + \length{\bridge}$ unique arcs that must be visited. Each  uniquely visited arc requires at least one forwarding rule at the head node. Thus, we need at least $\length{\edge}  + \length{\bridge}$ forwarding rules.
\end{proof}

All the edges in tree, star, and linear topologies are bridges rendering the lower bound $2\length{\edge}$. Hence, $\SetW^*$ is indeed the optimum solution in number of static rules over such topologies.

Note that the lower bound given by Lemma~\ref{prop:lowerbound} may not be achievable by $\WO$ in general. Though, it can be achievable by a longer walk. An example is depicted in Figure~\ref{fig:counterex}. The forwarding plane topology represented by the leftmost graph has no bridges and has $\length{\edge} = 8$. Lemma~\ref{prop:lowerbound} indicates that we need at least 8 forwarding rules. Solving Chinese Postman Problem however leads to an optimal walk $\WO$ with $\WOlen = 10$. The corresponding logical ring is shown at the center of Figure~\ref{fig:counterex}. The walk traverses the arc $e_{41}$ twice, leading to $\kappa = 1$. 
Hence, $\WO$ requires $(\WOlen- \kappa) = 9$ static forwarding rules. This is strictly larger than the lower bound. The difference is due to the edge between $s_2$ and $s_3$ as it must be traversed in both directions requiring installation of one rule at $s_2$ and one rule at $s_3$. Other candidate solutions for $\WO$ suffer from a similar non-bridge link reversal. This situation is avoided over the logical ring constructed by a longer walk as shown by the rightmost ring topology in the figure. Instead of moving directly from $s_3$ to $s_2$ (as in the optimal walk), a longer path $s_3-s_4-s_1-s_2$ is taken. As a result, the ring has 12 hops with $\kappa = 4$ ($e_{12}$, $e_{34}$, $e_{41}$ occur two, two, and three times, respectively). Thus, this longer walk requires $(\Wlen-\kappa) = 8$ static rules achieving the lower bound. \comment{Next, we formalize the achievability result.}

\begin{figure}[!t]
\begin{center}
\includegraphics[width=\columnwidth]{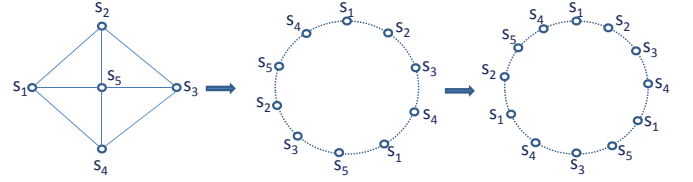}
\caption{Example of a topology, where $\WO$ cannot achieve the lower bound in Lemma~\ref{prop:lowerbound}, but a longer walk achieves it.}
\label{fig:counterex}
\end{center}
\vspace{-0.3cm}
\end{figure}

\comment{
\begin{figure}[!t]
\begin{center}
\includegraphics[width=0.5\columnwidth]{achievability.eps}
\caption{Constructive proof of achievability result using the shortest walk $\WO$.}
\label{fig:achieve}
\end{center}
\vspace{-1cm}
\end{figure}

\begin{theorem}[Achievability of Lower Bound] \label{lemma:achievability} Lower bound in Proposition~\ref{prop:lowerbound} is achievable.
\end{theorem}
\begin{proof}
We will show the achievability result by constructing a longer walk using $\WO$. If $\WO$ itself achieves the lower bound, then we are done. Otherwise $\WO$ must have at least one non-bridge edge that is traversed in opposite directions. Denote the arcs of this non-bridge edge as $e_{ij}$ and $e_{ji}$ between vertices $s_i$ and $s_j$. The goal of the construction is to replace $e_{ji}$ with a path from $s_j$ to $s_i$  by reusing the same arcs on $\WO$. This guarantees that we eliminate one link reversal without creating a new one. 

According to Lemma~\ref{lemma:wopt}, these arcs are traversed exactly once in $\WO$. Therefore, $\WO$ defines a logical ring as in Fig.~\ref{fig:achieve}, where removal of $e_{ij}$ and $e_{ji}$ divides the ring into two parts. Suppose there exists an $s_k$ that appears on both parts. Then, from $s_j$ we can trace $\WO$ up to $s_k$ in lower part of the ring (green arrows in the figure), we jump to the upper part of the ring at $s_k$ and trace $\WO$ up to $s_i$. This indeed replaces $e_{ji}$ with a longer path that reuses parts of $\WO$. Repeating this process for every non-bridge edge then would eliminate all the reverse traversals of the same edge.  Now, the proof of the problem reduces to proving that initial supposition is true, i.e., an $s_k$ that appears on both parts of the ring exists.

We can prove that the statement is true by way of contradiction. If no such $s_k$ exists, then removal of arcs $e_{ij}$ and $e_{ji}$ divide $\edge$ into two disjoint subsets with no common vertex. This implies that  the edge between $s_i$ and $s_j$ is a bridge. This is a contradiction.
\end{proof}
}

\begin{algorithm}[t]
\caption{Heuristic for reducing $(\WOlen - \kappa(\WO))$}
\label{Alg:Alg1}
\begin{algorithmic}%[1]
\STATE{\emph{Step 1:}} Find a solution $\WO$ to Chinese Postman Problem. Let $v_k$ denote the absolute position of each hop on $\WO$ and $f(v_k)$ is the actual switch at that position.
\STATE{\emph{Step 2:}} Construct set $\Lambda$ such that a pair of arcs $\{e_{ij}, e_{ji}\} \in \Lambda$ iff both $e_{ij}$, $e_{ji}$ appear in $\WO$ and the corresponding edge between $s_i$ and $s_j$ in $\ugraph$ is not a bridge.
\WHILE{$\Lambda \neq \varnothing$} 
\STATE{\emph{Step 3:}} Remove the pair of arcs $\{e_{ij}, e_{ji}\}$ from $\Lambda$ that are closest to each other on $\WO$. Without this pair, $\WO$ divides into two parts. Denote the part that keeps $s_i$ as $W_1$ and the part that keeps $s_j$ as $W_2$.
 \STATE{\emph{Step 4:}} Construct set $\Gamma$ such that $\{v_k,v_l\} \in \Gamma$ iff $f(v_k) = f(v_l)$, $v_k$ is  on $W_1$ and $v_l$ is on $W_2$.
 \WHILE{$\Gamma \neq \varnothing$}
 \STATE{\emph{Step 5:}} Remove a pair $\{v_k,v_l\}$ from $\Gamma$. 
  \STATE{\emph{Step 6:}} Denote the cycle that starts from $v_k$ on $W_1$ and ends at $v_l$ on $W_2$ as $W_3$. Denote the cycle that starts from $v_l$ on $W_2$ and ends at $v_k$ on $W_1$ as $W_4$. Construct a new walk $\WOnew$ by stitching $W_3$ to the reverse of walk $W_4$ (or equivalently stitching the reverse walk of $W_3$ to $W_4$). 
  \IF{$\kappa(\WOnew) > \kappa(\WO)$}
  \STATE{\emph{Step 7:}} $\WO := \WOnew$ and break.
  \ENDIF
  \ENDWHILE
 \STATE{Step 8:} Remove any pair  $\{e_{ij}, e_{ji}\}$ from $\Lambda$ if either of its arcs is not on $\WO$.
\ENDWHILE
\end{algorithmic} 
\end{algorithm}

\comment{
\begin{algorithm}[t]
\caption{Heuristic for reducing $(\WOlen - \kappa(\WO))$}
\label{Alg:Alg1}
\begin{algorithmic}%[1]
\STATE{\emph{Step 1:}} Find a solution $\WO$ to Chinese Postman Problem.
\STATE{\emph{Step 2:}} Construct set $\Lambda$ such that a pair of arcs $\{e_{ij}, e_{ji}\} \in \Lambda$ iff both $e_{ij}$, $e_{ji}$ appear in $\WO$ and the corresponding edge between $s_i$ and $s_j$ in $\ugraph$ is not a bridge. 
\WHILE{$\Lambda \neq \varnothing$} 
\STATE{\emph{Step 3:}} Remove a pair of arcs $\{e_{ij}, e_{ji}\}$ from $\Lambda$. Without this pair, $\WO$ divides into two cycles. Denote the cycle that starts with and ends at $s_i$ as $W_1$ and the one that starts with and ends at $s_j$ as $W_2$.
 \STATE{\emph{Step 4:}} Construct set $\Gamma$ such that $s_k \in \Gamma$ iff $s_k$ is visited by both $W_1$ and $W_2$.
 \WHILE{$\Gamma \neq \varnothing$}
 \STATE{\emph{Step 5:}} Remove an $s_k$ from $\Gamma$. 
  \STATE{\emph{Step 6:}} Locate $s_k$ on $W_1$ and $W_2$. Denote the cycle that starts from $s_k$ on $W_1$ and ends at $s_k$ on $W_2$ as $W_3$. Denote the cycle that starts from $s_k$ on $W_2$ and ends at $s_k$ on $W_1$ as $W_4$. Construct a new walk $\WOnew$ by stitching $W_3$ to the reverse of walk $W_4$ (or equivalently stitching the reverse walk of $W_3$ to $W_4$). 
  \IF{$\kappa(\WOnew) > \kappa(\WO)$}
  \STATE{\emph{Step 7:}} $\WO := \WOnew$ and break.
  \ENDIF
  \ENDWHILE
 %Remove  $e_{ij}$ and $e_{ji}$ from $\WO$ to divide the walk into two walks $W_1$ and $W_2$
\ENDWHILE
\end{algorithmic}
\end{algorithm}
}

\begin{figure}[!b]
\begin{center}
\includegraphics[width=0.6\columnwidth]{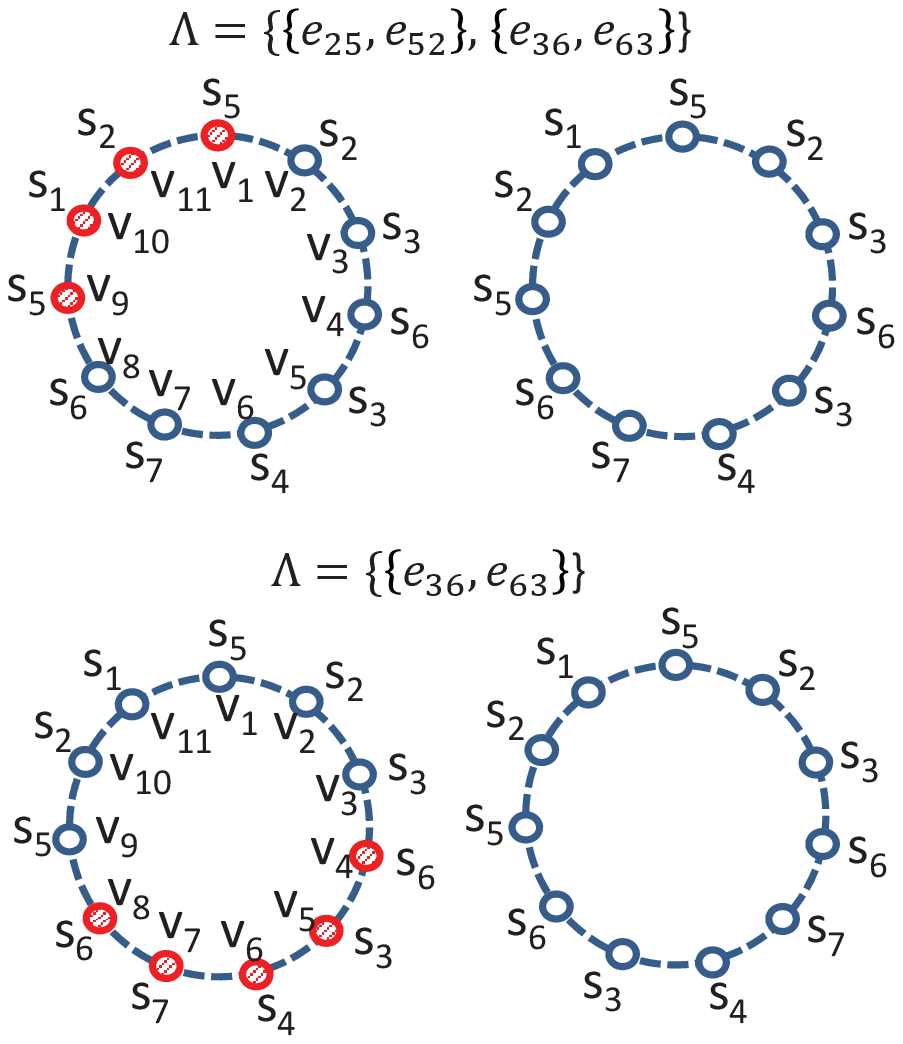}
\caption{Algorithm~\ref{Alg:Alg1} in action.}
\label{fig:alg1}
\end{center}
\vspace{-0.5cm}
\end{figure}

Although optimality in number of static rules and number of control messages can be achieved by a longer walk, we may pay a substantial penalty in delay for large topologies as the walk is much longer. Thus, it would be more desirable to pay a small penalty in number of static rules and do not incur additional delays by sticking with $\WO$. In general, there is more than one solution to Chinese Postman Problem (CPP). Then, we should search for a solution of CPP that achieves the minimum $(\WOlen - \kappa(\WO))$ (or equivalently maximum $\kappa(\WO)$)  to attain lower forwarding rule costs. Algorithm~\ref{Alg:Alg1} provides a simple transformation on top of an initial $\WO$ that iterates over the non-bridge links that have both of their arcs in $\arc(\WO)$. An example of these iterative transformations is shown in Fig.~\ref{fig:alg1} based on the forwarding plane topology given in Fig.~\ref{fig:systemmodel}. In Step 1, initial solution of CPP returns a walk of length 11 with $\kappa = 0$. In Step 2 of the algorithm, we have $\Lambda = \{\{e_{25},e_{52}\},\{e_{36},e_{63}\}\}$. In Step 3 and Step 4, we inspect the pair $\{e_{25},e_{52}\}$ and construct  $\Gamma = \{\{v_1, v_9\}\}$, respectively. In Step 6, $W_3 = s_5 \rightarrow s_1   \rightarrow s_2  \rightarrow s_5$ is reversed and the rest of the walk remains the same. The top-right ring in clockwise direction depicts newly constructed walk of same length as before but $\kappa = 1$. Hence, it requires one less forwarding rules than the initial walk. Next iteration starts with this new walk (bottom-left ring in the figure). There is only one candidate pair $\{e_{36},e_{63}\}$ to consider with $\Gamma = \{\{v_4,v_8\}\}$. By reversing the part of the ring $s_6 \rightarrow s_3 \rightarrow s_4 \rightarrow s_7 \rightarrow s_6$ in Step 6, we obtain a new walk (bottom-right ring) with $\kappa = 2$. Algorithm terminates at this stage. The newly constructed walk is still a solution of CPP. Moreover it requires $(\WOlen - \kappa(\WO)) = 9$ static rules. Since $|E|=9$ in $\ugraph$, this is an optimum walk in number of static forwarding rules.     
Our evaluations over real topologies (see Section~\ref{sec:results}) indicate that Algorithm~\ref{Alg:Alg1} is optimum in number of static rules for 60\% of the topologies, and for the rest it stays within $14\%$ of the optimum.

\subsection{Asymmetric Failure Scenarios}
Verifying topology connectivity in these scenarios require one or more control packets to visit each link in both directions. Since there are exactly $2|\edge|$ arcs to be visited and each arc must have a distinct forwarding rule, the number of static forwarding rules is $\geq 2|\edge|$. Due to our bi-directional link assumption on $\dgraph(\vertex,\arc)$, every vertex has equal in-degree (i.e., the number of incoming arcs) and out-degree (i.e., the number of outgoing arcs). In connected graphs, this is a sufficient condition for existence of an Euler cycle \cite{edmonds73}. An Euler cycle can be found in $\Theta(2|E|)$ steps over $\dgraph$. Remember that an Euler cycle visits every arc in a directed graph exactly once. Therefore, $\SetW^* = \{\WO\}$ is the optimum solution both in total number of static forwarding rules and in number of control messages. The latency of this solution, however, becomes $\tau \times 2|\edge|$.

\subsection{Mapping the Walk to Static Forwarding Rules} \label{se:mapping}
Once a walk $W$ is determined, we need to install forwarding rules at switch $s_i$ for each arc $e_{ij}$ in $\arc(W)$.  The walk and the set of corresponding forwarding rules must uniquely define a control flow for topology connectivity. For this purpose, one of the packet headers is used to identify that this packet is used to verify topology connectivity. Suppose source IP address field is used to this end and a unique local IP address $IP_A$ is assigned. Then, all static forwarding rules for $W$ must match source IP address to $IP_A$. 
Table~\ref{table:routingrules} shows how the closed walk $s_1 \rightarrow s_5 \rightarrow s_2 \rightarrow s_3 \rightarrow s_6 \rightarrow s_7 \rightarrow s_4 \rightarrow s_3 \rightarrow s_6 \rightarrow s_5 \rightarrow s_2 \rightarrow s_1$ is realized using 9 forwarding rules. Let us use this walk example below to describe the basic steps to convert a walk to a set of static forwarding rules. Note that there are many other alternatives to define matching rules and actions to realize the same walk.
\begin{table}[!t]
\caption{Sample Forwarding Rules}
\vspace{-0.3cm}
\begin{center}
\begin{tabular}{|c|c|c|}
\hline
SW & Matching Rules & Actions \\
\hline
$s_1$ & src IP ==  $IP_A$ & \parbox{2.5cm}{\centering set VLAN to $vlan_1$ \\  forward to $e_{15}$}  \\ \hline
$s_2$ & src IP ==  $IP_A$ $\wedge$ VLAN == $vlan_1$  & forward to  $e_{23}$ \\ \hline
$s_2$ & src IP ==  $IP_A$ $\wedge$ VLAN == $vlan_2$  & forward to  $e_{21}$\\ \hline
$s_3$ & src IP ==  $IP_A$ & forward to  $e_{36}$ \\ \hline
$s_4$ & src IP ==  $IP_A$ & \parbox{2.5cm}{\centering set VLAN to $vlan_2$ \\  forward to $e_{43}$}  \\ \hline
$s_5$ & src IP ==  $IP_A$ & forward to  $e_{52}$\\ \hline
$s_6$ & src IP ==  $IP_A$ $\wedge$ VLAN == $vlan_1$  & forward to  $e_{67}$ \ \\ \hline
$s_6$ & src IP ==  $IP_A$ $\wedge$ VLAN == $vlan_2$  & forward to  $e_{65}$ \ \\ \hline
$s_7$ & src IP ==  $IP_A$ & forward to  $e_{74}$
\\ \hline
\end{tabular}
\end{center}
\label{table:routingrules}
\vspace{-0.5cm}
\end{table}%

If a switch $s_i$ is visited only once by $W$, the static rule does not need to match any other packet field than the one that identifies the control packet. In our example, $s_1$, $s_4$, and $s_7$ are visited only once and these switches only need to match source IP address against $IP_A$ to take the correct forwarding action.  If a switch $s_i$ is visited multiple times but always forwards to the same link, then again the static rule does not need to match any other packet field. E.g., $s_3$ and $s_5$ occurs twice in the example walk yet their forwarding action is the same: $s_3$ forwards to $s_6$ and $s_5$ forwards to $s_2$. If a switch $s_i$ is visited multiple times and forwards to different links at some of these visits, it requires a separate forwarding rule for each of these links. In many occasions outgoing link has a one-to-one mapping to the incoming switch port and thus a forwarding rule that matches incoming switch port in addition to the source IP address is sufficient to identify the outgoing link.  When incoming switch port is not sufficient, an additional header field must be used to identify the outgoing link. One can use VLAN tagging or MPLS labeling or a custom header field using an extension to OpenFlow protocol. Suppose we use VLAN tags as it is a standard feature in most switches including OpenFlow. Then, we assign a unique VLAN tag for each outgoing link of $s_j$. In the example, $s_2$ and $s_6$ receive the same control packet twice from the same incoming interface and yet must forward to different outgoing links at each time. Since there is no 1-1 mapping to an incoming interface, incoming switch port cannot be used as a differentiating field. Consider first $s_2$. We first assign each outgoing interface a VLAN tag, $vlan_1$ to $e_{23}$ and $vlan_2$ to $e_{21}$. Table~\ref{table:routingrules} shows the matching rules with these tags. Now, the question is which switches as a forwarding action should tag the control packet. The walk example corresponds to clockwise direction over the logical ring depicted at the bottom-right corner  of Fig.~\ref{fig:alg1}. Iterating back from the tagged interface, we inspect the ring in the counter clockwise direction to identify a switch that can reuse its existing matching rule and add VLAN tagging to the action set of that matching. For instance, starting from $e_{23}$ tagged with $vlan_1$ and traversing the ring in counter clockwise direction, we first reach $s_5$. But, $s_5$ is visited twice on the ring and has only one forwarding rule. Thus, we rule it out as a candidate. Moving counter clockwise direction further, we hit $s_1$ that occurs once in the ring and has one forwarding rule. Hence, adding VLAN tagging with $vlan_1$ into its action set would not possibly contradict with another forwarding decision taken at the same switch. We repeat the process with $e_{21}$ tagged with $vlan_2$. Walking in counter clockwise direction, we reach $s_5$, $s_6$, and $s_3$ that cannot add a tagging action either because they require an additional forwarding rule ($s_5$, $s_3$) or their forwarding rules are not yet specified (e.g., $s_6$). Taking one more step counter clockwise, $s_4$ occurs only once on the ring and without modifying its matching rule we can add an additional VLAN tagging action. The only outstanding switch with no forwarding rules specified is $s_6$ at this point. We first check if we can piggyback on existing VLAN tags $vlan_1$ and $vlan_2$. Indeed, we can reuse $vlan_1$ for $e_{67}$ and $vlan_2$ for $e_{65}$ as shown in Table~\ref{table:routingrules}.

\subsection{How do controllers verify the topology connectivity?}
By constructing a single closed  walk that visits each edge (or arc) at least once, we formed a logical ring topology where all switches in the forwarding plane are part of. Thus, any controller $C_i$ can use any $s_j \in \domain_{C_i}$  to inject a control packet for topology verification.  In our example in Fig.~\ref{fig:systemmodel} and using forwarding rules in Table~\ref{table:routingrules}, $C_3$ can use $s_4$ or $s_7$ to inject a packet with its source IP address set to $IP_A$. Similarly $C_1$ can use $s_1$, $s_2$, or $s_5$ and $C_2$ can use $s_3$ or $s_6$ to inject the same control packet. Controllers must initialize the control packet header properly. E.g., if $s_2$ is the injection point, according to Table~\ref{table:routingrules}, not only the source IP address but also the VLAN tag must be assigned a valid value. In OpenFlow protocol, controllers can either tell the switch to which outgoing interface the control packet should be sent to or tell the switch to treat the packet the same as a packet coming from a particular incoming interface. In either case, the injection point is where the walk starts and once a packet is injected unless there is another rule specified, it would be indefinitely looped around the logical ring topology. Therefore, controllers must break the loop by defining a loopback rule at the injection point so that when the packet completes the walk, the packet is forwarded back to the controller that injected the packet. This loopback rule can be dynamically installed to any switch in a controller's current control domain.  Naturally, loopback rules must have priority over the static rules installed for the closed walk. Since multiple controllers might be simultaneously inspecting the topology, this loopback rule must uniquely identify the controller. A simple solution is to use destination MAC address field and install a forwarding rule at the injection point that matches this field to MAC address of the injecting controller. Thus, each controller must also set this field in the packet header before it injects it. An alternative is to use TTL field in a matching rule (i.e., check if $TTL == 0$) and set the initial value of TTL in the control packet header to the length of the walk at the controller. The action set then must include a "decrement TTL" action.

\begin{figure}[!t]
\begin{center}
\includegraphics[width=0.6\columnwidth]{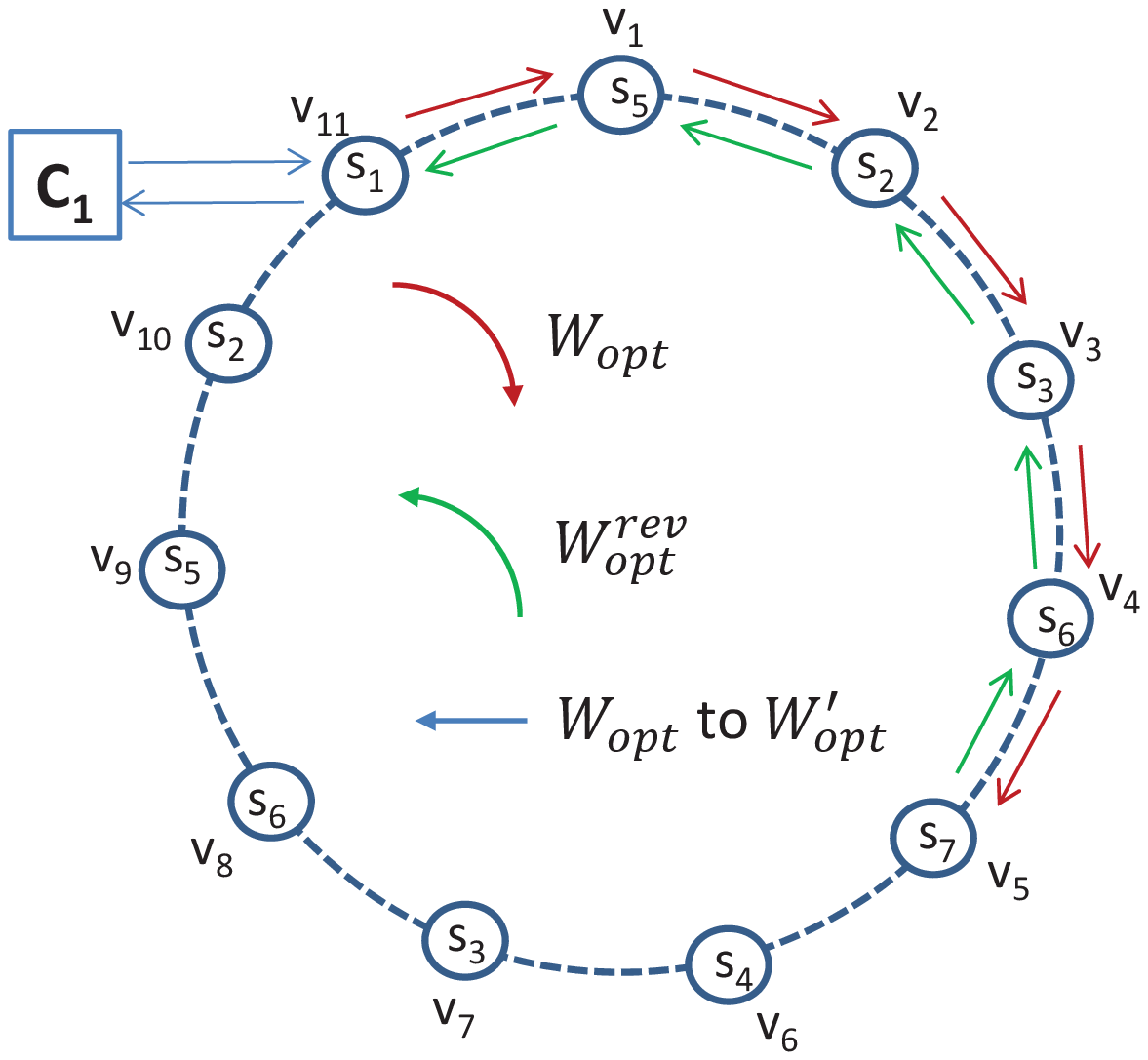}
\caption{An example of bidirectional logical ring topology. }
\label{fig:linkdetection}
\end{center}
\vspace{-0.5cm}
\end{figure}

\section{Locating an Arbitrary but Single Link Failure}
\label{se:strong}
The solution to locate an arbitrary link failure (when one or more link failures occur) is based upon constructing a bidirectional logical ring topology. At the high level, controllers inject multiple control packets each inspecting a different segment of the ring. Depending on which control messages are received back or not by the originating controller, the controller not only detects link failures but also guarantees to locate one of the failed links. 

We use the solution $\WO$ in the previous section as a clockwise walk on this logical ring. We also define a counter clockwise walk $\WOrev$ by reverting the arcs. E.g., for  $\WO = s_1 \rightarrow s_5 \rightarrow s_2 \rightarrow s_3 \rightarrow s_6 \rightarrow s_7 \rightarrow s_4 \rightarrow s_3 \rightarrow s_6 \rightarrow s_5 \rightarrow s_2 \rightarrow s_1$, $\WOrev = s_1 \leftarrow s_5 \leftarrow s_2 \leftarrow s_3 \leftarrow s_6 \leftarrow s_7 \leftarrow s_4 \leftarrow s_3 \leftarrow s_6 \leftarrow s_5 \leftarrow s_2 \leftarrow s_1$. Since we assume  bidirectional links, $\WOrev$ is a valid closed walk over both $\ugraph$ and $\dgraph$. Further, $\WOrev$ is also a solution to CPP (i.e., $\WOrev$ has length $\WOlen$) and it requires the same number of forwarding rules as $\WO$ (i.e., $\kappa(\WO) = \kappa(\WOrev)$). Once the distinct forwarding rules for $\WO$ and $\WOrev$ are installed, any controller can attach to this logical ring from any switch in its $\domain$ and inject control packets that traverse it in either clockwise or counter clockwise direction. Fig.~\ref{fig:linkdetection} shows an example of the logical ring constructed for the forwarding plane topology in Fig.~\ref{fig:systemmodel}. We label each node on the logical ring uniquely as $v_i, i=1, \hdots, \WOlen$. A switch on the forwarding plane can map to multiple logical nodes on this ring. Define $f(v_i)$  as the surjective function that maps virtual nodes on the logical ring onto the switches in the forwarding plane. In Fig.~\ref{fig:linkdetection}, actual switch IDs are shown within the circles and virtual node labels are indicated next to them. 

To be able to locate a link failure, we should be able to inspect any segment of the logical ring. For this purpose, we also install bounce back rules at each node of the logical ring. Suppose we use source IP address field to differentiate between the directions on the ring, e.g., $IP_A$ is used for $\WO$ and $IP_B$ is used for $\WOrev$. The bounce back rule at a node reverses the clockwise walk to the counter clockwise walk.  Suppose, we assign each virtual node $v_i$ a unique IP address $IP_{v_i}$.  A bounce back rule can be specified by using various fields in the packet header. Let us fix the destination IP address for this purpose. For each logical node $v_i$, a static bounce back rule is installed on $f(v_i)$ in the following form 

\emph{If source IP == $IP_A$ $\wedge$ destination $IP == IP_{v_i}$ then set source IP to $IP_B$ and send back to incoming port}.  

Note that a bounce back rule must always have a higher priority than a forwarding rule for $\WO$ when a packet has matching fields for both rules. When a controller (e.g., $C_1$ in Fig.~\ref{fig:systemmodel}) injects a packet at $f(v_{in})$ (e.g., when $v_{in} = v_{11}$, $f(v_{in}) = s_1$)  with source IP address set to $IP_A$ and destination IP address set to $IP_{v_k}$ (e.g., $IP_{v_5}$), the packet travels in clockwise direction from $v_{in}$ (e.g., $v_{11}$) to $v_k$ (e.g., $v_{5}$) and travels back to $v_{in}$ in counter clockwise direction over the constructed logical ring.  As before, the controller must install a loopback rule at injection point  $f(v_{in})$ (e.g., $s_1$) so that the switch $f(v_{in})$ forwards the packet back to the controller rather than forwarding it along $\WOrev$.

Once the static rules for $\WO$, $\WOrev$ and bounce back as well as the dynamic loopback rules are in place, each controller can fix an injection point on the logical ring and perform a binary search over the ring by eliminating half of the links from consideration at each iteration. For instance, if the link between $s_4$ and $s_7$ fails, $C_1$ can learn about this using the ring topology in Fig.\ref{fig:linkdetection} as follows. First $C_1$ determines whether all the connections are healthy or not by injecting a control packet for topology verification (e.g., source IP is set as $IP_A$). The packet never comes back to $C_1$ indicating one or more link failures. Next, $C_1$ targets half of the logical link topology by sending a control packet with source IP as $IP_A$ and destination IP as $IP_{v_5}$. The packet is received back as there are no failures in this segment. $C_1$ expands the search up to $v_8$ by setting the source IP as $IP_A$ and destination IP as $IP_{v_8}$. As this part of the ring includes the failed link, $C_1$ does not receive the packet back. $C_1$ shrinks the search up to $v_6$ and injects a fourth control packet with source IP set to $IP_A$ and destination IP set to $IP_{v_6}$. Since $C_1$ does not receive this fourth packet, but it received back the second packet, $C_1$ can conclude that the link between $s_4$ and $s_7$ has failed.

Note that when there are multiple link failures, the binary search mechanism would locate the first failure in the clockwise direction from the injection point of the logical ring. E.g., if  $e_{25}$ and $e_{47}$ fail and $C_1$ uses $v_{11}$ as the injection point, the procedure above would be able to locate only $e_{25}$.

\subsection{Cost of Locating a Single Link Failure}
Counting the rules for $\WO$, $\WOrev$, and bounce backs, the total number of static rules to be installed can be written as $3\WOlen - 2\kappa(\WO)$. Since $\kappa \geq 0$ and $\WOlen \leq 2\length{\edge}$, the total number of static rules is upper bounded by $6\length{\edge}$ for both symmetric and asymmetric failures. As locating a link failure trivially verifies the topology and topology verification requires at least $\length{\edge}$ and $2\length{\edge}$ rules for symmetric and asymmetric cases, respectively, we can easily establish the order optimality of our solution.

Including the topology verification stage, the solution requires at most $1+\lceil \log_2(\WOlen) \rceil$ control messages to be injected. Interpreting each control packet as a binary letter, $\length{\edge}$ edges and $2\length{\edge}$ arcs cannot be all checked with less than $\log_2(\length{\edge})$ and $\log_2(2\length{\edge})$ messages, respectively. Hence, we also have order optimality in number of control packets for both symmetric and asymmetric cases.

We can find the best case and worst case time delays as follows. For simplification, suppose $\WOlen$ is a power of two. The best case delay happens when each subsequent control message travels a shorter distance (i.e., exactly half of the previous one). Hence, the best case failure scenario is when the failed link occurs on the logical ring next to the injection point in the clockwise direction. Brute-force summation over these paths including the topology verification stage amounts to $(3\WOlen - 2)$ hops in total. The worst case delay happens when each subsequent control message travels a longer distance (i.e., increase exactly by half of the not inspected part of the ring). In other words, the worst case failure scenario happens when the failed link occurs on the logical ring next to the injection point in the counter clockwise direction. Again a brute-force summation over the path lengths of each control message including the topology verification stage results with total hop count of $\WOlen (2\log_2(\WOlen) - 1)  + 2$. With per hop switching delay of $\tau$, the latency $T$ is sandwiched as: \[ (3\WOlen - 2) \times \tau \leq T \leq   [ \WOlen (2\log_2(\WOlen) - 1) + 2] \times \tau \]
To put things into perspective, for data center scale operations with as much as $2^{16} = 65536$ links and $\tau = 1{\mu}s$, the lowest latency is in the order of hundreds of milliseconds and the highest latency is in the order of seconds (to be exact $196.6 ms \leq T \leq 2.032$  seconds).\footnote{We take the value of $\tau$ from http://pica8.org/documents/pica8-datasheet-48x1gbe-p3290-p3295.pdf that lists the switching latency of packets of size 64 bytes or less at $1 {\mu}s$. Our control packets are indeed short messages.}  For a moderate size topology with about 1024 links, the worst case latency becomes less than 20 ms with as low delays as 3 ms achievable.
\subsection{Speeding Up Search Time}
To reduce the latency of failure location below 1 second mark, we can trade off latency with the overhead of control messages and/or forwarding rules.

\subsubsection{More Control Messages}
Instead of just performing a sequential search on the logical ring, we can use more control packets to parallelize the search. If  we allow $m$ control message to be injected in parallel, we can inspect $(m+1)$ segments of the ring at once. We can then reduce the number of iterations to $\lceil log_{m+1}(\WOlen) \rceil$. Including the topology verification stage, the total number of control messages $M$ become $1+m \lceil log_{m+1}(\WOlen) \rceil$. In exchange, a trivial upper bound on latency ($T_{UB}$) can be expressed as $(\WOlen + 2\WOlen \lceil log_{m+1}(\WOlen) \rceil) \times \tau$. Below, we tabulate $T_{UB}$ in seconds for a large topology with $65536$ links and $\tau = 1{\mu}s$.\comment{ at interesting points.  }

%\begin{table}[!t]
%\caption{Sample Forwarding Rules}
%\vspace{-0.3cm}
\begin{center}
\begin{tabular}{|c|c|c|c|c|c|c|c|}
\hline
$m$ & 1 & 2 & 3 & 4 & 40 & 255 &65535 
\\ \hline
$M$ &  17 &   23 & 25 & 29 & 121 & 511 & 65536\\ \hline
$T_{UB}$ & 2.16 & 1.51  & 1.11   &  0.98 & 0.46 & 0.33& 0.20
\\ \hline
\end{tabular}
\end{center}
%\label{table:tradeoff}
%\end{table}%

Sequential search requires 17 messages and (in previous section) we computed the worst case delay as 2.032 sec. The upper bound above for 17 messages is relatively tight at 2.16 sec.  At full parallelization with 65536 control messages all injected at the same time each checking a separate link, the upper bound on latency becomes as low as 200 ms. Using 29 messages total, we can achieve sub-second latency for locating the link failure even for this large topology. Since the control messages are short (e.g., $< $64 bytes), one can go up to m = 255 parallel messages that would incur a manageable total load (e.g., $<$16 KByte per iteration round).

\subsubsection{More Static Rules}
Instead of starting the search only in the clockwise direction, we can use both directions on the ring. To enable this, we can install bounce back rules at each $v_i$ on the ring topology to reverse $\WOrev$ onto $\WO$. For this we can assign a second unique IP address $IP_{v_i}^{(2)}$ to each $v_i$ and install a rule at each $f(v_i)$ as follows:

\emph{If source IP == $IP_B$ $\wedge$ destination $IP == IP_{v_i}^{(2)}$ then set source IP to $IP_A$ and send back to incoming port}.  

After determining which half of the ring has a faulty part, provided that the fault is closer to the injection point in the counter clockwise direction, we can switch the search direction to shorten the walk distance. For large topologies, this would cut down the worst case latency of locating a single link failure roughly by one half. Note that adding a second set of bounce back rules still preserves order optimality in number of static rules, which is now $4\WOlen - 2\kappa(\WO) \leq 8\length{\edge}$.

This reduction in latency can be combined with the results of the previous section and the upper bounds stated there can again be roughly reduced by one half for small $m$ as new trivial upper bound becomes $(\WOlen + \WOlen \lceil log_{m+1}(\WOlen) \rceil) \times \tau$.  E.g., for m = 4, $T_{UB} \approx 0.52$ seconds for $65536$ links.

\begin{figure}[!t]
\begin{center}
\includegraphics[width=0.8\columnwidth]{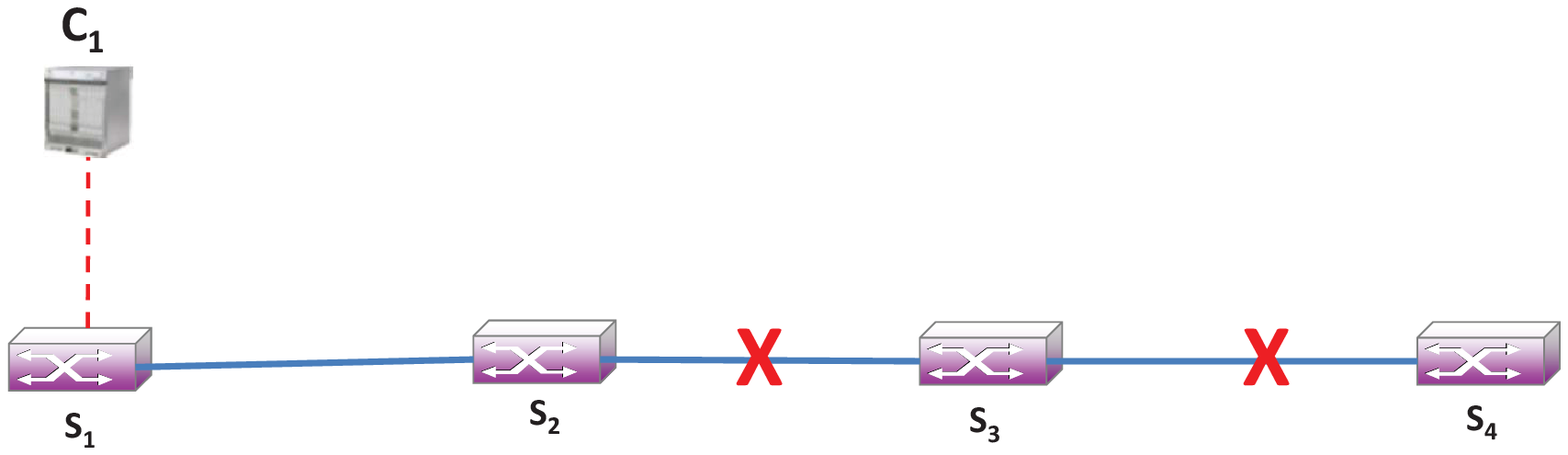}
\caption{An example of hidden undetectable failure.}
\label{fig:multifailure}
\end{center}
\vspace{-0.6cm}
\end{figure}

\section{Locating Multiple Link Failures}
\label{se:hard}
For arbitrary topologies, one cannot guarantee to locate multiple link failures. A trivial example of this is
given in Fig.~\ref{fig:multifailure}. Forwarding plane is a line topology such that a controller has only one of the end points in its $\domain$. When there are multiple failures on this topology, only the closest failure can be located and all the failures that are further away from controllerÕs $\domain$ are hidden and undetectable. E.g., the failure between $s_3$ and $s_4$ in Fig.~\ref{fig:multifailure} is not detectable by $C_1$.

Nevertheless, by building on the solution in the previous section,  we can detect more than one link failure in a probabilistic sense. First, a controller may have a switch in its $\domain$ in multiple locations on the logical ring. Further, it has typically more than one switch in its $\domain$ at any given time. Thus, it can tap on the logical ring at multiple points and locate the first failure in the clockwise direction from each of these points. If the second set of bounce back rules are also installed, it can also detect potentially other failures that are closest to each  injection point in the counter clockwise direction. Note that there are no guarantees that these detected failures map actually to the same link. Thus, with probability one, we locate at least one link failure and at non-zero probabilities up to $2\times \beta(\domain,\WO)$ failures might be located. Here, $\beta(\domain,\WO)$ counts the total multiplicity of switches in $\domain$ on $\WO$.%higher number of failures.

\section{Evaluations over Real Topologies}
\label{sec:results}
In the previous sections, we already presented upper and lower bounds on various performance metrics as well as provided some numerical values mainly for latency. For asymmetric failures, the performance can be computed easily as $\kappa(\WO) = 0$ and $\WOlen = 2\length{\edge}$.  Therefore, our evaluations mainly focus on the performance of symmetric failure case. For evaluations,
we use the public topologies posted in Internet Topology Zoo with their sizes varying from a few links to more than one hundred links. We mainly investigate two things: (i) How close do we get to the lower bounds established for symmetric failure cases? (ii) If we install bounce back rules for both $\WO$ and $\WOrev$, how many failed links do we locate? 

Fig.~\ref{fig:result1} plots the ratio $(\WOlen - \kappa(\WO))/(\length{\edge}  + \length{\bridge})$ as a function of the topology size in number of links/edges $\length{\edge}$. When the ratio is one, Algorithm~\ref{Alg:Alg1} becomes an optimum solution for topology verification. For almost $60\%$ of the topologies, indeed our solution is optimum. The ratio remains at 1.14 or below and except for 2 topologies at 1.1 or below indicating that for real network topologies we remain with at most $14\%$ of the optimum and $10\%$ of the optimum for $98\%$ of the topologies. Another observation is that the performance does not seem to be too sensitive against the topology size.

\comment{
\begin{figure}[!t]
\begin{center}
\includegraphics[width=0.8\columnwidth]{CDF_Undirected.eps}
\caption{Per Interface Cost for Symmetric Failures}
\label{fig:result1}
\end{center}
\vspace{-0.5cm}
\end{figure}
}

\begin{figure}[!t]
\begin{center}
\includegraphics[width=0.95\columnwidth]{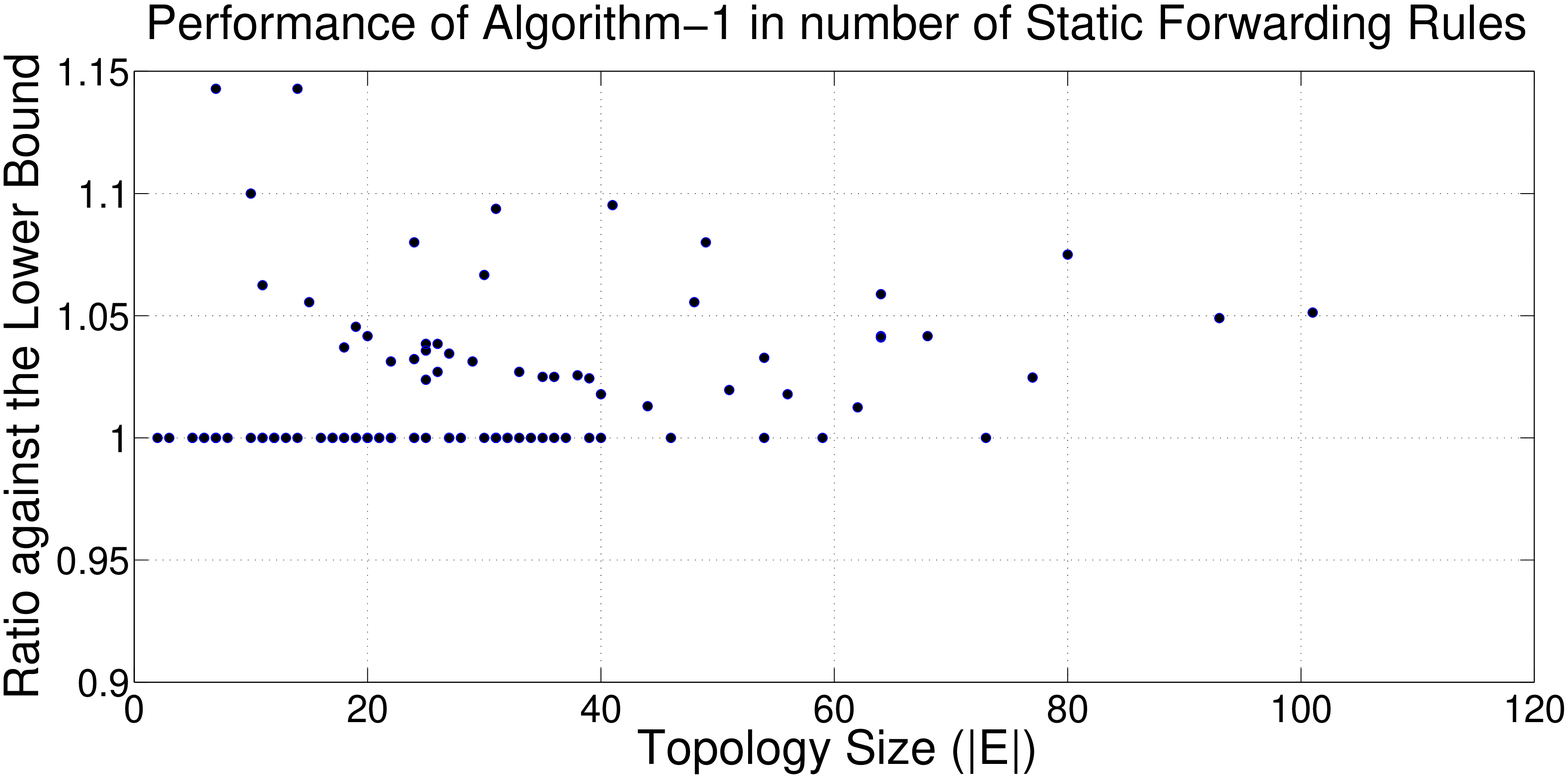}
\caption{Performance of Algorithm~\ref{Alg:Alg1} against the lower bound.}
\label{fig:result1}
\end{center}
\vspace{-0.5cm}
\end{figure}

\begin{figure}[!t]
\begin{center}
\includegraphics[width=0.95\columnwidth]{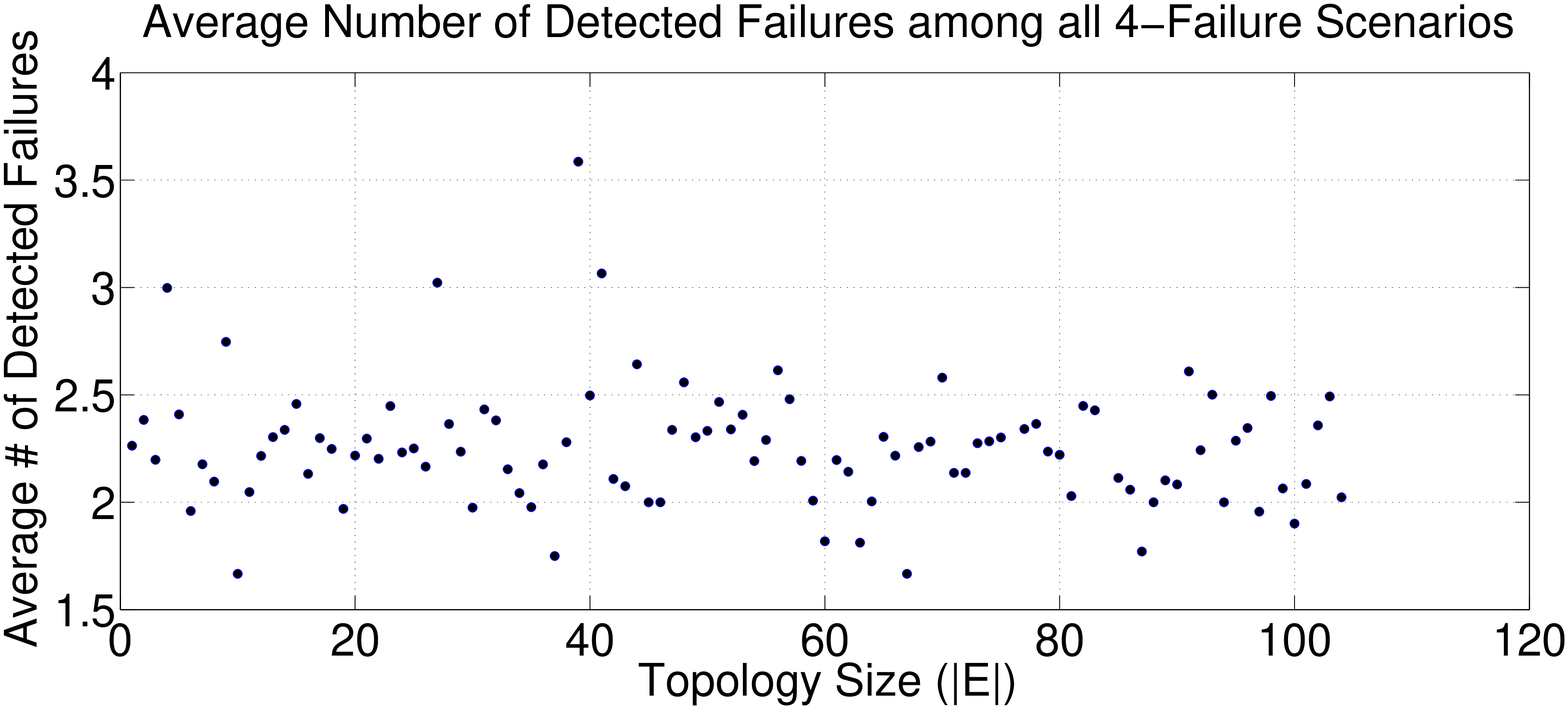}
\caption{Average number of detected failures per topology across all 4-failure patterns with $\length{\domain} = 1$.}
\label{fig:result2}
\end{center}
\vspace{-0.7cm}
\end{figure}

Fig.~\ref{fig:result2} quantifies how many failures we can actually locate. Provably, we already know that we can locate  at least one. For evaluations, we fixed the number of failures, but exhaustively iterated over every failure combination on a given topology. We then assumed that there is only one switch in $\domain$ of a particular controller and iterated over each switch as a possible injection candidate to find out as a function of injection point how many failures could be located. In the figure, we plot the average number of detected failures where the average is computed over all failure patterns and injection points for a given topology. We fixed the number of failures to four as larger number of failures was computationally quite prohibitive for us. As it can be seen, for a great majority of cases, we could actually locate two or more failures on average. For one topology with as much as 39 edges, the average was at 3.6 (i.e., for many failure patterns we could detect all four failed links). Note that although we consider only one switch in the control domain of the inspecting controller, the same switch can occur multiple times on the constructed logical ring. If a switch occurs twice, using both $\WO$ and $\WOrev$, directions we can locate up to four failures. In our evaluations, we observed that a switch can occur three times for some topologies. Clearly, not all occurrences lead to locating new failures.   We also have not observed any strong correlation between detecting more than one failure and the topology size. 

\section{Related Work}
\label{se:related}
There are several works both for classical networks and SDNs that are closely related to ours.  In all-optical networks, fault diagnosis (or failure detection) is done by using monitoring trails (m-trails) \cite{Ahuja:INFOCOM2008, Harvey:INFOCOM2007, Tapolcai:INFOCOM2009, Wu:GLOBECOM2007}. An m-trail is a pre-configured optical path. Supervisory optical signals are launched at the starting  node of an  m-trail and a monitor is attached to the ending node. When the monitor fails to receive the supervisory signal, it detects that some link(s) along the trail has failed. The objective is then to design a set of m-trails with minimum cost such that all link failures up to a certain level can be uniquely identified. Although the problem set up looks very similar, there are some fundamental differences in our work:  (1) Monitor locations are analogous to the switches in set $\domain$. Unlike all optical networking scenarios, in our problem set up, these monitor locations are not known a priori. Our solutions would work regardless of where the monitors are placed. (2) In all-optical networks, there is a per link cost measured by the sum bandwidth usage of all m-trails traversing that link. In our set up, the costs are in number of control packets and number of static forwarding rules. Furthermore, in our set up, a given static forwarding rule can be (and are in fact) reused by different walks. There are also works on graph-constrained group testing \cite{graph-group-testing:ISIT2010,graph-group-testing:TransIT2012} that is very similar to fault diagnosis in all-optical networks, and share the same fundamental differences.

SDN era has also generated many recent works on network debugging, fault diagnosis and detection, policy verification, dynamic and static state analysis \cite{Handigol:HotSDN2012,Kazemian:NSDI2012, Godfrey:NSDI2013, McGeer:ICC2012, Reitblatt:Sigcomm2012}. As far as we are aware of, all these works are complementary to our work in terms of the problem spaces they specifically target. In \cite{Kazemian:NSDI2012}, for instance, authors piggyback on existing rules installed for data flows to identify which header space can locate link failures given these rules. Installing static forwarding rules to be used for later forwarding plane diagnosis via control flows and optimizing the associated costs are the main features we have that are also absent in prior art on SDNs.

\section{Conclusion}
\label{se:conc}
We presented new results on how to diagnose forwarding plane using static forwarding rules in SDNs. Our results are provably either optimal or order optimal in number of static forwarding rules and number of control messages. For topology verification, the evaluations over real topologies revealed that our solution stayed within 14\% of the lower bound and for more than half the topologies matched the lower bound in number of control messages.  We also presented latency performance. At the expense of slight increase in bandwidth usage and forwarding rules, sub-second delays in locating link failures even at data-center scale topologies are achievable. Our solutions guarantee locating a single link failure, but can also probabilistically locate multiple link failures as dictated by the topology and failure patterns. We lack a closed form tight approximation of finding these probabilities for arbitrary topologies and it remains as a future work.

%\input{Introduction}

%\input{Conclusion}

% conference papers do not normally have an appendix

% use section* for acknowledgement

% trigger a \newpage just before the given reference
% number - used to balance the columns on the last page
% adjust value as needed - may need to be readjusted if
% the document is modified later
%\IEEEtriggeratref{8}
% The "triggered" command can be changed if desired:
%\IEEEtriggercmd{\enlargethispage{-5in}}

% references section

% can use a bibliography generated by BibTeX as a .bbl file
% BibTeX documentation can be easily obtained at:
% http://www.ctan.org/tex-archive/biblio/bibtex/contrib/doc/
% The IEEEtran BibTeX style support page is at:
% http://www.michaelshell.org/tex/ieeetran/bibtex/

%\bibliographystyle{IEEEtran}

\bibliographystyle{abbrv}
\bibliography{PaperList} 
% <OR> manually copy in the resultant .bbl file
% set second argument of \begin to the number of references
% (used to reserve space for the reference number labels box)

%\input{Appendix}

% that's all folks
\end{document}